\newtheorem{theorem}{Theorem}
\newtheorem{lemma}[theorem]{Lemma}
\newcommand{\es}{\varnothing}
\newcommand{\hn}{\hspace{-1.5ex}}
\newcommand{\sm}{\setminus}
\newcommand{\cov}{\mathbin{\prec\hspace{-1ex}\cdot}}
\newcommand{\scov}{\mathbin{\prec\hspace{-.25ex}\cdot}}
\newcommand{\preq}{\preccurlyeq}
\title{Counting Independent Sets in Cocomparability Graphs}
\author{Martin Dyer \and Haiko M\"{u}ller}
\date{School of Computing, University of Leeds, Leeds LS2~9JT, UK \\[.5\baselineskip]
  \texttt{\{M.E.Dyer|H.Muller\}@leeds.ac.uk}
}
\begin{document}

\maketitle

\begin{abstract}
We show that the number of independent sets in cocomparability graphs can be counted in linear time,
as can counting cliques in comparability graphs. By contrast, counting cliques in cocomparabilty graphs
and counting independent sets in comparabilty graphs are \#P-complete. We extend these results to counting
maximal cliques and independent sets. We also consider the fixed-parameter versions of counting
cliques and independent sets of given size $k$. Finally, we combine the results to show that both
counting cliques and independent sets in permutation graphs are in linear time.
\end{abstract}

\section{Introduction}
Counting independent sets in graphs is known to be \#P-complete in general~\cite{PrBa83},
even for several restricted cases \cite{Vadhan}. Even approximate counting is NP-hard,
and unlikely to be in polynomial time for bipartite graphs~\cite{DGGJ03}. However,
smaller graph classes may permit polynomial time algorithms. For example, \cite{bip perm}
gives a linear time algorithm for counting independent sets in bipartite permutation graphs.
Here ``linear time'' means computable with a number of arithmetic operations and comparisons
linear in the size of the graph. Similar results are known for \emph{chordal graphs}~\cite{chordal}
and \emph{tree-convex graphs} \cite{tree convex}. For information on these and other
graph classes, see \cite{survey}.

Counting cliques in a graph is equivalent to counting independent sets in the complementary graph.
Thus, for general graphs, these two problems have equivalent complexity. However, for graph classes
the two problems may have very different complexity.

Here we consider the problems of counting independent sets and cliques in
three classes of graphs: \emph{cocomparability graphs}, \emph{comparability graphs}
and \emph{permutation graphs}. See section~\ref{sec:prelim} below for definitions.
In each case we show that either the counting problem
can be accomplished in linear time, or else counting is \#P-complete.

In section~\ref{sec:cocomp} we present a simple linear time algorithm for counting independent sets
in cocomparability graphs, and show that counting cliques is \#P-complete.
We extend this to counting maximal independent sets, and independent sets of a given size,
and maximal independent sets of a given size.
In section~\ref{sec:comp}, we modify this to give linear time algorithms for counting cliques
in comparability graphs, and show that counting independent sets is \#P-complete.

Together, these results imply simple linear time algorithms for counting both cliques and independent sets
in permutation graphs, as we discuss in section~\ref{sec:perm}. Thus our results give strong generalisations
of the algorithm of \cite{bip perm}.

\section{Preliminaries and notation}\label{sec:prelim}
For integers $n$ and $n'$ with $n \le n'$ the set $\{n, n+1, \dots, n'\}$
of consecutive integers will be abbreviated as $[n,n']$.

Let $P=(V,\prec)$ be a finite partial order, with $|V|=n$.
The \emph{cover relation} $\cov$ of $\prec$ is its transitive reduction,
defined by $u \cov v$ if $u \prec v$ and there is no $w \in V$ with $u \prec w$
and $w \prec v$. By $u \preq v$ we mean $u \prec v$ or $u=v$.

A \emph{chain} in $(V,\prec)$ is a set $S \subseteq V$ of pairwise comparable
elements. That is, the restriction of $\prec$ to $S$ is a linear order. A chain
$S = \{v_i \mid i \in [1,k]\}$ with $v_{i-1} \prec v_i$ for all $i \in [2,k]$
is \emph{tight} if $v_{i-1} \cov v_i$ holds for all $i \in [2,k]$.

A \emph{linear extension} of $(V,\prec)$ is a permutation $v_1,v_1,\ldots,v_n$ of $V$ such that
$v_i \prec v_j$ implies $i<j$.

Let $G=(V,E)$ be a graph with vertex set $V$,
with $|V|=n$ and $|E|=m$. Here all graphs are finite, undirected and simple,
unless stated otherwise. We will denote the complementary graph by $\bar{G}=(V,\bar{E})$,
where $\bar{E}=\{\{u,v\} \mid u,v\in V, \{u,v\}\notin E, u \ne v\}$. Let $\bar{m}=|\bar{E}|$, that is $\bar{m}=\binom{n}{2}-m$,
and $m^*=\min\{m,\bar{m}\}$.

A \emph{clique} in a graph is a set of pairwise adjacent vertices.
A set of pairwise non-adjacent vertices is \emph{independent}.
Therefore a clique of $G$ is an independent set in the complement of $G$ and vice versa.
A \emph{maximal} clique or independent set is maximal with respect to set
inclusion. By contrast, \emph{maximum} refers to the largest cardinality.

A partial order $P=(V,\prec)$ can be considered as digraph $(V,A)$
with vertex set $V$ and arcs $(x,y) \in A$ for all pairs such that $x\prec y$.
The comparability graph for $P$ is the underlying undirected
graph $(V,E)$ of $(V,A)$. That is, $E = \{\{x,y\} \mid x \prec y\}$. Then a graph
$G=(V,E)$ is a \emph{comparability graph} if there is a partial order $P$ on the set
$V$ such $G$ is the comparability graph of $P$. Given a comparabity graph $G$,
a corresponding partial order $P$ and a linear extension can be computed in
$O(m+n)$ (linear) time \cite{McCSpi94}. Note, however, that this algorithm does
not certify that $G$ is a comparability graph. Recognition is currently only known
in $O(n+m\log n)$ time. We will sidestep these issues by assuming the input comparability
graph is given with a transitive orientation $P$ and a linear extension of this.

The complements of comparability graphs are called \emph{cocomparability graphs}, so
these are the incomparability graphs of partially ordered sets. Again, recognition
is not known in linear time, but a linear time algorithm can compute $P$ and a
linear extension for the complementary comparability graph \cite{KM,McCSpi94}. Again,

A graph is a \emph{permutation graph} if there is a permutation $v_1,v_2,\ldots,v_n$ of $V=[1,n]$
such that $\{v_i,v_j\}\in E$ if and only if $i<j$ implies $v_i>v_j$. It is not difficult to show that $G$ is a
permutation graph if and only if it is both a comparability graph and a cocomparability graph.
Permutation graphs can be recognised, and a permutation ordering obtained, in linear time~\cite{KMMS},
so we need no assumptions on the input graph.

These are well-established classes of graphs, and are all subclasses of \emph{perfect graphs}.
Important subclasses of comparability graphs are \emph{bipartite graphs} and \emph{cographs}.
\emph{Interval graphs} are an important subclass of cocomparability graphs
(they are graphs that are both chordal and cocomparability).
Permutation graphs are an important subclass of both comparability and cocomparability  graphs.

\section{Cocomparability graphs}\label{sec:cocomp}

Let $G=(V,E)$ be a cocomparability graph and let $\prec$ be a partial
order on $V$, with linear extension $v_1,v_2,\ldots,v_n$.
We will extend the poset $(V,\prec)$ by a unique minimal
element $\bot \notin V$ and a unique maximal element
$\top \notin V\cup\{\bot\}$. That is, we define $\bot \prec v$ and $v \prec \top$ for
all $v \in V$, and (if not enforced by transitivity) $\bot \prec \top$.
Let $S^+ = S \cup \{\bot,\top\}$, for any $S\subseteq V$. Then we
denote the extended partial order by $(V^+,\prec)$, If
$\bot,v_1,v_2,\ldots,v_n,\top$ is  linear extension of $(V^+,\prec)$,
and $S\subseteq V$, we write $\max(S)$ for $v_i$ such that $i=\max\{j:v_j\in S\}$,

\begin{lemma} \label{l:chain}
  Let $G=(V,E)$ be a cocomparability graph which is the incomparability graph of a poset
  $(V,\prec)$, and let $(V^+,\prec)$ be the above extension. Then
  \begin{enumerate}[label=(\arabic*),topsep=0pt,itemsep=0pt]
  \item \label{is=c} a set $S \subseteq V$ is independent in $G$
        if and only if $S^+$ is a chain of $(V^+,\prec)$;
  \item \label{mis=tc} a set $S \subseteq V$ is maximally independent in $G$
        if and only if $S^+$ is a tight chain of $(V^+,\prec)$.
  \end{enumerate}
\end{lemma}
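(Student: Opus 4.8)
The plan is to trade the graph-theoretic statements for order-theoretic ones using the defining property of cocomparability graphs: $\{u,v\}\in E$ exactly when $u$ and $v$ are incomparable in $(V,\prec)$. Hence a set $S\subseteq V$ is independent in $G$ if and only if its elements are pairwise comparable, that is, $S$ is a chain of $(V,\prec)$. For part~\ref{is=c} it then remains only to check that inserting the sentinels does not matter: since $\bot\prec v\prec\top$ for every $v\in V$ and $\bot\prec\top$, the elements $\bot$ and $\top$ are comparable with everything in $V^+$, so $S$ is a chain of $(V,\prec)$ if and only if $S^+$ is a chain of $(V^+,\prec)$.

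For part~\ref{mis=tc} I would first dispose of the trivial direction: if $S$ is not independent then it is not maximally independent, and $S^+$ is not a chain, hence not a tight chain. So assume $S$ is independent and, using part~\ref{is=c}, write the chain $S^+$ as $\bot=w_0\prec w_1\prec\cdots\prec w_k\prec w_{k+1}=\top$ with $S=\{w_1,\dots,w_k\}$. The heart of the argument is the claim that, for $v\in V\sm S$, the set $S\cup\{v\}$ is independent if and only if $w_{i-1}\prec v\prec w_i$ for some $i\in[1,k+1]$. Indeed, by part~\ref{is=c} applied to $S\cup\{v\}$, independence of $S\cup\{v\}$ is equivalent to $v$ being comparable with every element of $S$; since $v$ is automatically comparable with $w_0=\bot$ and $w_{k+1}=\top$, this says precisely that $S^+\cup\{v\}$ is a chain, and in a finite chain the new element $v$ occupies a well-defined slot strictly between two consecutive old elements $w_{i-1}$ and $w_i$. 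The converse is immediate.

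To finish, I would translate the right-hand side into the cover relation. For each $i\in[1,k+1]$ the relation $w_{i-1}\cov w_i$ in $(V^+,\prec)$ fails exactly when some element lies strictly between $w_{i-1}$ and $w_i$; such an element cannot be $\bot$ or $\top$, so it lies in $V$, and it lies outside $S$ since $w_{i-1},w_i$ are consecutive members of the chain $S^+$. Combining this with the claim: some $v\in V\sm S$ makes $S\cup\{v\}$ independent if and only if $w_{i-1}\cov w_i$ fails for some $i$, i.e.\ $S$ is not maximally independent if and only if the chain $S^+$ is not tight. This establishes part~\ref{mis=tc}.

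I do not expect a genuine obstacle; the proof is essentially bookkeeping. The one step that deserves care is the identification of ``$v$ comparable with all of $S$'' with ``$v$ slots strictly between two consecutive elements of the chain $S^+$'': this uses that $S^+$ is \emph{finite} and, crucially, that it has a genuine least and greatest element, so that the boundary situations (where $v$ lies below everything in $S$, or above everything in $S$) fall under the same description. The sentinels $\bot$ and $\top$ are introduced precisely to make these boundary cases uniform, which is also why the tight-chain condition on $S^+$ automatically encodes minimality and maximality of the extreme elements of $S$.
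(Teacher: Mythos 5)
Your proof is correct and follows essentially the same route as the paper: part~\ref{is=c} via the adjacency--incomparability correspondence plus the observation that the sentinels $\bot,\top$ are comparable to everything, and part~\ref{mis=tc} by equating ``some vertex can be added to $S$'' with ``some consecutive pair of $S^+$ fails to be a cover.'' The only difference is that you spell out the bookkeeping (the slotting of $v$ between consecutive chain elements and the translation to the cover relation) that the paper's much terser proof leaves implicit, which is a point in your favour rather than a divergence.
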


\begin{proof}
  Two vertices in $V$ are $\prec$-comparable if and only if they non-adjacent in
  $G$. The extra elements $\bot$ and $\top$ are comparable to all vertices
  in $V$. Together these imply property~\ref{is=c}.

  To see \ref{mis=tc} we first consider an independent set $S$ of $G$ that
  is not maximal independent. Hence there is a vertex $v \in V$ such that
  $S\cup\{v\}$ is still independent. By property~\ref{is=c} the set
  $S^+\cup\{v\}$ is a chain, and hence $S^+$
  is also a chain, but not a tight one.

  Now let $S$ be a maximal independent set of $G$. So by property~\ref{is=c}
  the set $S^+$ is a chain of $(V^+,\prec)$. Since $S$ is a maximal
  independent set of $G$, for all $v \in V$ the set $S\cup\{v\}$ is not
  independent in $G$. By property~\ref{is=c} the set $S^+$
  is not a chain, and therefore $S^+$ is a tight chain of
  $(V^+,\prec)$.
\end{proof}

For every vertex $v \in V^+$ let $G_v = G[\{u \in V \mid u \preq v\}]$.
Especially, $G_{\bot} = (\es,\es)$ and $G_{\top} = G$.
Let $V_v = \{u \in V^+ \mid u \preq v\}$. For every $v \in V^+$, the relation
$\prec$ restricted to $V_v$ is a partial order with unique minimal element $\bot$
and unique maximal element $v$.

\subsection{Independent sets} \label{ss:is}

For every vertex $v \in V^+$ let $A(v)$ be the set of independent sets
of $S$ of $G_v$ with $v \in S$, and $a(v) = |A(v)|$.  That is,
$a(\top)$ is the number of independent sets of $G$. We can evaluate $A$
and $a$ recursively as follows:
\begin{align*}
  &A(\bot) = \{\es\} &&  a(\bot) = 1 \\
  &A(v_i)    = \bigcup_{u \prec v_i} \big\{S\cup\{v_i\} \mid S \in A(u)\big\} &&
  a(v_i)    = \sum_{u \prec v_i} a(u) &(i\in[1,n]) \\
  &A(\top) = \bigcup_{u \prec \top} A(u) && a(\top) = \sum_{u \prec \top} a(u)
\end{align*}
The sets $A(v)$ can be exponential in size (for example if $E=\es$), but the
recurrence for $a$ can be evaluated efficiently. We use the
linear extension of $\prec$ to evaluate the equations above in the order
$\bot,v_1,v_2,\ldots,v_2,\top$, as shown. At the end $a(\top)$ is the number
of independent sets in $G$.

We show that the recurrence above is correct by proving a sequence of lemmas,
which themselves are proven directly or by induction on $\prec$.

\begin{lemma} \label{l:is in=>is}
  For all $v \in V^+$ every set $S \in A(v)$ is independent in $G_v$
  and $v \in V$ implies $v \in S$.
\end{lemma}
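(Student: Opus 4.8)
The plan is to induct on the linear extension $\bot, v_1, v_2, \ldots, v_n, \top$ of $(V^+,\prec)$, which is a valid induction order since every element $u$ with $u \prec v$ appears before $v$. The statement to establish at each $v \in V^+$ is: every $S \in A(v)$ is independent in $G_v$, and if $v \in V$ then $v \in S$. Since all sets $A(u)$ referenced in the recurrence for $A(v)$ involve elements $u \prec v$, by the time we reach $v$ we may assume the claim holds for all such $u$.

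First I would dispose of the base case $v = \bot$: here $A(\bot) = \{\es\}$, and $\es$ is vacuously independent in $G_\bot = (\es,\es)$, while $\bot \notin V$, so the second clause is vacuous. Next, for $v = v_i$ with $i \in [1,n]$, take any $S \in A(v_i)$. By the recurrence, $S = S' \cup \{v_i\}$ for some $u \prec v_i$ and some $S' \in A(u)$. The second clause is immediate since $v_i \in S$. For independence: by the induction hypothesis $S'$ is independent in $G_u$, hence $S' \subseteq V_u \sm \{\bot\} \subseteq \{w : w \preq u\}$, so every $w \in S'$ satisfies $w \preq u \prec v_i$, i.e. $w \prec v_i$; by Lemma~\ref{l:chain}\,\ref{is=c} (or directly, since $\prec$-comparable vertices are non-adjacent in $G$) each such $w$ is non-adjacent to $v_i$ in $G$, and $S' \cup \{v_i\} \subseteq \{w : w \preq v_i\}$, so $S$ is independent in $G_{v_i}$. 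Finally, for $v = \top$, any $S \in A(\top)$ lies in $A(u)$ for some $u \prec \top$, hence is independent in $G_u$; since $G_u$ is an induced subgraph of $G_\top = G$ on a subset of $\{w : w \preq u\} \subseteq V$, the set $S$ is independent in $G$ as well, and $\top \notin V$ makes the second clause vacuous.

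I expect the only mild subtlety to be bookkeeping around the extended poset --- specifically, confirming that membership of $S$ in $A(u)$ forces $S \subseteq V_u$ with $\bot \notin S$ (so that "independent in $G_u$" even makes sense as a statement about a subset of $V$), and that $G_u$ is genuinely an induced subgraph of $G_v$ whenever $u \prec v$. Both follow from the definitions $G_v = G[\{w \in V : w \preq v\}]$ and $V_v = \{w \in V^+ : w \preq v\}$, together with transitivity of $\prec$; once these are noted the induction is routine. The non-adjacency of $\prec$-comparable vertices, already observed in the proof of Lemma~\ref{l:chain}, is the one external fact doing real work.
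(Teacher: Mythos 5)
Your proposal is correct and follows essentially the same route as the paper: induction along the linear extension, with the key step being that every element of $S'\in A(u)$ lies in $V(G_u)$, hence is $\preq u \prec v$ by transitivity and therefore non-adjacent to $v$. The paper phrases this via chains (then invokes the chain/independent-set correspondence) while you argue directly with adjacency, but the substance is identical.
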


\begin{proof}
  The base of the induction is for $v=\bot$.
  Clearly $\es$ is the unique independent set of $G_{\bot}$.

  Now let $v \in V$. Every set $S \in A(v)$ contains the vertex $v$
  and vertices in $A(u)$ for $u\prec v$. By induction hypothesis $S\sm\{v\}$
  is an independent set of some $G_u$. Therefore $S$ only contains vertices
  of $G_v$. Moreover $S\sm\{v\}$ is a chain in $(V,\prec)$. By transitivity,
  $S\sm\{v\} \in A(u)$ and $u \prec v$ imply that $S$ is also a chain in $(V,\prec)$
  and therefore an independent set of $G_v$.

  The latter argument also applies to $v=\top$.
\end{proof}

\begin{lemma} \label{l:is is=>in}
  Every nonempty independent set $S$ of $G$
  is contained in $A(\max(S))$.
\end{lemma}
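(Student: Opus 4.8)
The plan is to prove Lemma~\ref{l:is is=>in} by induction on $\prec$ (equivalently, along the linear extension), showing that every nonempty independent set $S$ of $G$ belongs to $A(\max(S))$. Write $v = \max(S)$, so $v \in V$ and $v \in S$. By Lemma~\ref{l:chain}\ref{is=c}, $S^+$ is a chain of $(V^+,\prec)$, hence $S$ itself is a chain, and since $v$ is the $\prec$-largest element of $S$ (this is where we must check that $\max(S)$ in the linear-extension sense coincides with the $\prec$-maximum of the chain $S$ — it does, because a chain is linearly ordered by $\prec$ and the linear extension respects $\prec$), every other element of $S$ is $\prec v$.

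First I would handle the base case: if $S = \{v\}$ is a singleton, then $S \sm \{v\} = \es \in A(\bot)$ and $\bot \prec v$, so by the defining recurrence $S = \es \cup \{v\} \in A(v) = A(\max(S))$. For the inductive step, suppose $|S| \geq 2$. Let $u = \max(S \sm \{v\})$; then $u \in S \sm \{v\}$, so $u \prec v$, and $S \sm \{v\}$ is a nonempty independent set of $G$ (a subset of an independent set) with $\max(S \sm \{v\}) = u$. Moreover $u \prec v$ means $u \preq v$, so $S \sm \{v\}$ is actually an independent set of $G_u \subseteq G_v$. By the induction hypothesis applied to $S \sm \{v\}$ (which has strictly fewer elements, and whose maximum $u$ is $\prec v$), we get $S \sm \{v\} \in A(u)$. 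Then by the recurrence $A(v) = \bigcup_{w \prec v} \{T \cup \{v\} \mid T \in A(w)\}$, taking $w = u$ and $T = S \sm \{v\}$ yields $S = (S \sm \{v\}) \cup \{v\} \in A(v) = A(\max(S))$.

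The one point that needs a little care — and the main potential obstacle — is the interaction between the two notions of ``maximum'': $\max(S)$ is defined via the fixed linear extension $\bot, v_1, \ldots, v_n, \top$, whereas the recurrence for $A$ ranges over $u \prec v$ in the partial order. The resolution is that when $S$ is independent, $S^+$ is a $\prec$-chain (Lemma~\ref{l:chain}), so on $S$ the linear extension and $\prec$ agree, and hence $\max(S)$ really is the $\prec$-top of $S$ and $\max(S \sm \{v\})$ is $\prec$-below it. I would state this explicitly as the first sentence of the proof so that the recursion step is unambiguous. Everything else is a routine unwinding of the definitions of $A(\bot)$ and $A(v_i)$, together with the fact that independence is closed under taking subsets and that $u \preq v$ implies $G_u$ is an induced subgraph of $G_v$.
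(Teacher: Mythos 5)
Your proposal is correct and follows essentially the same route as the paper's proof: induct by removing the $\prec$-maximal element $v$ of the chain $S$, apply the induction hypothesis to place $S\sm\{v\}$ in $A(u)$ for some $u\prec v$ (with $\es\in A(\bot)$ as the base), and conclude $S\in A(v)$ from the recurrence. Your explicit remark that $\max(S)$ taken along the linear extension coincides with the $\prec$-maximum of the chain is a point the paper leaves implicit (``clearly $v=\max(S)$''), but it is the same argument.
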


\begin{proof}
  Let $S$ be an independent set of $G$. This implies that $S$ is a chain
  of $(V,\prec)$ and therefore $S$ has a unique maximal element $v$. We have
  $\es \in A(\bot)$ and, by induction hypothesis, $S\sm\{v\} \in A(u)$
  for some $u \prec v$. Consequently we have $S \in A(v)$, and clearly $v=\max(S)$.
\end{proof}

\begin{lemma} \label{l:is disjoint}
  For different $u,v \in V\cup\{\bot\}$ we have $A(u) \cap A(v) = \es$.
\end{lemma}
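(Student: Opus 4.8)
The plan is to derive everything from Lemma~\ref{l:is in=>is}, which already tells us two things about any $S\in A(v)$: that $S$ is independent in $G_v$, and that $v\in S$ whenever $v\in V$. The statement splits naturally into the case where one of the two elements is $\bot$ and the case where both lie in $V$.

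First I would dispose of $\bot$. We have $A(\bot)=\{\es\}$ by definition, while for any $v\in V$ every $S\in A(v)$ satisfies $v\in S$ by Lemma~\ref{l:is in=>is}, so $S\neq\es$. Hence $\es\notin A(v)$ and $A(\bot)\cap A(v)=\es$ for every $v\in V$, which also covers the symmetric case.

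Now take distinct $u,v\in V$ and suppose, for contradiction, that some $S\in A(u)\cap A(v)$. By Lemma~\ref{l:is in=>is} applied at $u$, the set $S$ is independent in $G_u=G[\{w\in V\mid w\preq u\}]$, so every element of $S$ is $\preq u$; in particular $v\in S$ (again by Lemma~\ref{l:is in=>is}) gives $v\preq u$. Applying the same argument at $v$ yields $u\preq v$. By antisymmetry of $\prec$ we get $u=v$, contradicting the assumption, so $A(u)\cap A(v)=\es$.

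I do not expect any real obstacle here; the only point requiring a little care is to make explicit the step from ``$S$ is independent in $G_u$'' to ``every element of $S$ is $\preq u$'', which is immediate from the definition of $G_u$ as an induced subgraph on $\{w\in V\mid w\preq u\}$. Everything else is a one-line deduction from the preceding lemma together with the defining properties of a partial order.
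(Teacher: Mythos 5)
Your proof is correct, and it rests on the same key ingredient as the paper's (Lemma~\ref{l:is in=>is}), but it organises the argument differently. The paper splits into cases according to whether $u$ and $v$ are $\prec$-comparable: for incomparable $u,v$ it argues graph-theoretically that $u$ and $v$ are adjacent in $G$, so no independent set contains both, whence $u\in S\sm T$ and $v\in T\sm S$; for comparable $u\prec v$ it notes $v\in T\sm S$ directly. You instead give a single unified order-theoretic argument: any $S\in A(u)\cap A(v)$ lies in $V(G_u)\cap V(G_v)$ and contains both $u$ and $v$, forcing $v\preq u$ and $u\preq v$, hence $u=v$ by antisymmetry. This buys you a cleaner proof with no case analysis, at the cost of leaning slightly harder on the (true, and established in the proof of Lemma~\ref{l:is in=>is}) fact that ``$S$ is independent in $G_u$'' includes $S\subseteq\{w\in V\mid w\preq u\}$ -- a point you rightly flag as the only step needing care. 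Your explicit treatment of the $\bot$ case is also fine; the paper absorbs it into the comparable case since $\bot\prec v$ for all $v\in V$. Both proofs are valid; yours is arguably the tidier of the two.
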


\begin{proof}
  If $u$ and $v$ are $\prec$-incomparable then these are adjacent vertices of $G$.
  Hence no independent set of $G$ can contain both $u$ and $v$.
  By Lemma~\ref{l:is in=>is} for all $S \in A(u)$ and all $T \in A(v)$
  we have $u \in S \sm T$ and $v \in T \sm S$, that is, $S \ne T$.
  Consequently $A(u) \cap A(v) = \es$ holds.

  Now let $u$ and $v$ be $\prec$-comparable.
  By symmetry we may assume $u\prec v$, especially $v \in V$.
  Again, Lemma~\ref{l:is is=>in} implies, for all $S \in A(u)$ and all
  $T \in A(v)$, that $v \in T \sm S$ holds. As before this means
  $S \ne T$, and therefore $A(u) \cap A(v) = \es$.
\end{proof}

We can implement the above recurrences for $a$ directly, but it is clear that the time
complexity is $O(n+\bar{m})$, not $O(n+m)$, since the summations are over the edges of
$\bar{E}$. If $\bar{m}<m$, this clearly implies $O(n+m)$ time, but otherwise we correct this as follows.

For all $v \in V^+$ we define $t(v)$ by
\begin{align*}
  t(\bot) &= 0 \\
  t(v_i)  &= a(\bot)+\sum_{j=1}^{i-1} a(v_j) && (i\in[1,n]) \\
  t(\top) &= a(\bot)+\sum_{j=1}^n a(v_j)
\end{align*}
The values of $t$ and $a$ are mutually recursive:
\begin{align*}
  t(\bot) &= 0 &  a(\bot) &= 1 \\
  t(v_1)  &= 1 &  a(v_1)  &= 2 \\
  t(v_i)  &= t(v_{i-1}) + a(v_{i-1}) &
  a(v_i)  &= t(v_i) - \hn\sum_{j<i,\,v_j \nprec v_i}\hn a(v_j) && (i\in[2,n]) \\
  t(\top) &= t(v_n) + a(v_n) & a(\top) &= t(\top)
\end{align*}
The recurrence for $t(v_i)$ is obvious, and for $a(v_i)$ we have
\[ t(v_i) - \hn\sum_{j<i,\,v_j \nprec v_i}\hn a(v_j)
=  a(\bot)+\sum_{j=1}^{i-1}a(v_j) - \hn\sum_{j<i,\,v_j \nprec v_i}\hn a(v_j)
=  \sum_{u \prec v_i} a(u)
=  a(v_i) \,, \]
as required. Now the summations are over the edges of $E$, and the algorithm is $O(n+m)$ time. Thus we have

\begin{theorem} \label{thm:is}
  Given a cocomparability graph on $n$ vertices, we can compute the number
  of its independent sets in time $O(n+m^*)$.
\end{theorem}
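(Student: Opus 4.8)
The plan is to turn the recurrences developed above into an explicit linear-time algorithm and to argue its correctness and running time. First I would establish correctness of the recurrence for $a$: by Lemmas~\ref{l:is in=>is}, \ref{l:is is=>in} and~\ref{l:is disjoint}, for each $v\in V^+$ the family $A(v)$ collects exactly the nonempty independent sets of $G$ whose maximum is $v$ (together with $\es\in A(\bot)$), and these families are pairwise disjoint over $v\in V\cup\{\bot\}$. Hence the independent sets of $G$ are partitioned by the value of $\max(S^+)$, which ranges over $V\cup\{\bot\}$, so $a(\top)=\sum_{u\prec\top}a(u)$ counts all of them, and the recursion $a(v_i)=\sum_{u\prec v_i}a(u)$ is justified inductively along the linear extension. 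The base cases $a(\bot)=1$ and $a(v_1)=2$ (namely $\es$ and $\{v_1\}$, since nothing precedes $v_1$ in $V$) are immediate.

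Next I would observe that a literal evaluation of this recurrence costs $\Theta(n+\bar m)$, because each sum $\sum_{u\prec v_i}a(u)$ ranges over the $\prec$-predecessors of $v_i$, i.e.\ over non-neighbours of $v_i$ in $G$. To get $O(n+m)$ I would introduce the prefix sums $t(v_i)=a(\bot)+\sum_{j<i}a(v_j)$ and verify the mutual recurrence displayed above: $t(v_i)=t(v_{i-1})+a(v_{i-1})$ is just the definition, and then
\[
a(v_i)=\sum_{u\prec v_i}a(u)=\Bigl(a(\bot)+\sum_{j<i}a(v_j)\Bigr)-\hn\sum_{\substack{j<i\\ v_j\nprec v_i}}\hn a(v_j)=t(v_i)-\hn\sum_{\substack{j<i\\ v_j\nprec v_i}}\hn a(v_j),
\]
the correction term now ranging over the neighbours of $v_i$ in $G$ appearing earlier in the linear extension. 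Summed over all $i$, the total work in these correction terms is $O(m)$, and everything else is $O(n)$, giving an $O(n+m)$ algorithm whose output is $a(\top)=t(\top)$, the number of independent sets of $G$.

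Finally I would handle the case $\bar m<m$. Here we run the unoptimised recurrence $a(v_i)=\sum_{u\prec v_i}a(u)$ directly, iterating over the $\bar m$ edges of $\bar E$ (equivalently, the cover-free comparabilities of $\prec$); this costs $O(n+\bar m)$. Combining the two regimes and choosing whichever is smaller yields running time $O(n+\min\{m,\bar m\})=O(n+m^*)$, as claimed. I do not expect a serious obstacle: the only point needing care is the bookkeeping in the correction sum — confirming that $\{\,j<i:\ v_j\nprec v_i\,\}$ is exactly the set of earlier neighbours of $v_i$ in $G$, which is immediate since $\prec$-incomparability is adjacency in $G$ and $v_j\prec v_i$ is impossible for $j>i$ in a linear extension — and noting that, given the poset together with a linear extension as part of the input, no recognition cost is incurred.
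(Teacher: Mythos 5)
Your proposal is correct and follows essentially the same route as the paper: correctness of the recurrence for $a$ via Lemmas~\ref{l:is in=>is}, \ref{l:is is=>in} and~\ref{l:is disjoint}, followed by the prefix-sum reformulation with $t$ to convert the $O(n+\bar m)$ evaluation into $O(n+m)$, and taking the better of the two regimes to get $O(n+m^*)$. No substantive differences from the paper's argument.
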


\begin{proof}
  Lemmas~\ref{l:is in=>is} and \ref{l:is is=>in} imply that, for all
  $v \in V^+$, the sets $A(v)$ defined by the recurrence contains
  all independent sets of $G_v$, especially $A(\top)$ is the set of all
  independent sets of $G$. It remains to show that $a(v) = |A(v)|$.
  To see this we observe that the unions over all $u\prec v$ are always disjoint
  by Lemma \ref{l:is disjoint}.

  Our algorithm just recurrences for $a(v)$ for all $v \in V^+$ in
  the order of linear extension of $\prec$. Then $a(\top)$ is the number
  of independent sets of $G$. Assuming additions can be performed in
  constant time, this takes $O(n+m^*)$ time, since each edge or non-edge
  of $G$ is used exactly once in the recurrences.
\end{proof}
The alternative evaluation of the recurrences which leads to $O(n+m^*)$
can be used in all the algorithms below, and we will not elaborate further
on this.

On the other hand, we have the following.

\begin{theorem} \label{thm:cliques}
  It is \emph{\#P}-complete to count cliques in a cocomparability graph.
\end{theorem}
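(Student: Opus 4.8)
The plan is to pass to the complementary problem. Counting cliques in a cocomparability graph $G$ is precisely counting independent sets in the comparability graph $\bar G$, and as $G$ ranges over all cocomparability graphs $\bar G$ ranges over all comparability graphs; so the assertion is equivalent to: counting independent sets in comparability graphs is \#P-complete. Membership in \#P is immediate, since an independent set is a certificate of size $O(n)$ verifiable in polynomial time. So everything is in the \#P-hardness.

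Bipartite graphs are comparability graphs --- orienting every edge from one colour class towards the other gives a partial order of height two whose comparability graph is the original bipartite graph, transitivity being vacuous --- so it suffices to prove \#P-hardness for counting independent sets in \emph{bipartite} graphs. This is known, consistent with the approximation hardness quoted in \cite{DGGJ03}; but I would derive it from scratch from the \#P-completeness of counting independent sets in arbitrary graphs \cite{PrBa83}, by subdivision and interpolation, as follows.

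Given an arbitrary graph $H$ with $m$ edges and an even $\ell\ge 2$, let $H_\ell$ be $H$ with every edge replaced by a path having $\ell-1$ internal vertices; then $H_\ell$ is bipartite (the original vertices being one colour class). With $F_0=0,F_1=1,\dots$ the Fibonacci numbers, a transfer-matrix computation using $M=\bigl(\begin{smallmatrix}1&1\\1&0\end{smallmatrix}\bigr)$, whose $\ell$-th power has entries $F_{\ell+1},F_\ell,F_\ell,F_{\ell-1}$, gives
\[
  \#\mathrm{IS}(H_\ell)\;=\;\sum_{T\subseteq V(H)}F_{\ell+1}^{\,a_0(T)}\,F_\ell^{\,a_1(T)}\,F_{\ell-1}^{\,a_2(T)}\,,
\]
where $a_j(T)$ counts the edges of $H$ with exactly $j$ endpoints in $T$. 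Dividing by $F_\ell^{\,m}$ and using $F_{\ell+1}=F_\ell+F_{\ell-1}$ turns the right side into $\Psi(y):=\sum_{T}(1+y)^{a_0(T)}y^{a_2(T)}$ with $y=F_{\ell-1}/F_\ell$, a polynomial of degree at most $m$ with $\Psi(0)=\#\mathrm{IS}(H)$ (every $T$ containing a monochromatic edge, i.e.\ with $a_2(T)>0$, contributes $0$ at $y=0$). Evaluating $\Psi$ at the $m+1$ distinct points $y=F_{\ell-1}/F_\ell$, $\ell=2,4,\dots,2(m+1)$ --- that is, calling the bipartite $\#\mathrm{IS}$ oracle on the polynomial-sized graphs $H_\ell$ --- pins down $\Psi$ by interpolation, so $\#\mathrm{IS}(H)=\Psi(0)$ is recovered. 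Since counting independent sets in arbitrary graphs is \#P-complete \cite{PrBa83}, this polynomial-time Turing reduction gives \#P-hardness for bipartite graphs, hence for comparability graphs, hence for counting cliques in cocomparability graphs.

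The one delicate step is the interpolation: one checks that the evaluation points $F_{\ell-1}/F_\ell$ are pairwise distinct (they decrease monotonically to $1/\varphi$), that the degree bound $m$ holds, and that all intermediate numbers have polynomially many bits. Everything else is bookkeeping. Interpolation can be avoided entirely by instead citing the \#P-completeness of counting antichains (equivalently, order ideals) of a finite poset, since the independent sets of the comparability graph of a poset are exactly its antichains.
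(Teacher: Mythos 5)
Your proof is correct and follows essentially the same route as the paper: pass to the complement, observe that bipartite graphs are comparability graphs, and invoke \#P-hardness of counting independent sets in bipartite graphs. The only difference is that the paper simply cites that last ingredient from \cite{PrBa83,Vadhan}, whereas you re-derive it via the (correct) subdivision-and-Fibonacci-interpolation argument; that extra work is sound but not needed.
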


\begin{proof}
Counting cliques in cocomparability graphs is equivalent to counting independent sets in comparabilty graphs.
Bipartite graphs are a subclass of comparability graphs. It is \#P-complete to count independent sets in
bipartite graphs~\cite{PrBa83,Vadhan}.
\end{proof}

In fact, even approximately counting cliques in cocomparability graphs appears to be hard, by the same argument,
being equivalent to the canonical problem \#BIS~\cite{DGGJ03}.

Finally, we note  that counting independent sets in cocomparability graphs is equivalent
to counting chains in partially ordered sets, and counting cliques is equivalent to counting
antichains. See~\cite{PrBa83} and~\cite{DGGJ03}, where antichains are called downsets. Thus
the results of this paper could be recast in  the language of partial orders.

\subsection{Maximal independent sets} \label{ss:mis}

Similarly we can compute the number of maximal independent sets. For every
vertex $v \in V^+$ let $B(v)$ denote the set of maximal independent sets
$S$ of $G_v$ with $v \in S$, and let $b(v) = |B(v)|$.
We can compute $B$ and $b$ as follows:
\begin{align*}
  B(\bot) &= \{\es\} &
  b(\bot) &= 1 \\
  B(v_i)    &= \bigcup_{u \scov v_i} \big\{S\cup\{v\} \mid S \in B(u)\big\} &
  b(v_i)    &= \sum_{u \scov v_i} b(u) && (i\in[1,n]) \\
  B(\top) &= \bigcup_{u \scov \top} B(u) &
  b(\top) &= \sum_{u \scov \top} b(u)
\end{align*}
The only difference to $A$ and $a$ is that the partial order $\prec$ is replaced
by its cover relation $\cov$, as anticipated in Lemma~\ref{l:chain}. For all
$u \prec v$ and all maximal independent sets $S'$ of $G_u$ the set $S'\cup\{v\}$
is maximal independent if and only if there is no $w \in V$ that is
$\prec$-between $u$ and $v$, hence if and only if $u \cov v$.

The recurrence for $B$ and $b$ can be shown to be correct by arguments similar
to the ones given above to justify the recurrence for $A$ and $a$.

\begin{lemma} \label{l:mis in=>mis}
  For all $v \in V^+$ every set $S \in B(v)$ is a maximal independent set of
  $G_v$ and $v \in V$ implies $v \in S$.
\end{lemma}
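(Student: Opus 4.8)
The plan is to mirror the proof of Lemma~\ref{l:is in=>is}, replacing the partial order $\prec$ throughout by its cover relation $\cov$ and ``independent'' by ``maximal independent''. I would argue by induction on $\prec$ (equivalently, along the linear extension $\bot,v_1,\ldots,v_n,\top$). The base case is $v=\bot$: here $G_{\bot}=(\es,\es)$, and $\es$ is its unique maximal independent set, so $B(\bot)=\{\es\}$ is as claimed.

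For the inductive step with $v\in V$, take $S\in B(v)$. By the recurrence, $S = S'\cup\{v\}$ for some $u\scov v$ and some $S'\in B(u)$. By the induction hypothesis $S'$ is a maximal independent set of $G_u$ and (if $u\in V$) contains $u$; in particular $S'$ is a chain of $(V_u,\prec)$ with maximum element $u$ (or $S'=\es$ if $u=\bot$). Since $u\prec v$, transitivity gives that $S'\cup\{v\}$ is a chain of $(V_v,\prec)$, hence independent in $G_v$ by Lemma~\ref{l:chain}\ref{is=c}, and clearly $v\in S$. It remains to check \emph{maximality}: suppose some $w\in\{x\in V\mid x\preq v\}$ has $S\cup\{w\}$ independent in $G_v$. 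Then $w$ is $\prec$-comparable to every element of $S$, in particular to $v$, so $w\prec v$ (as $w\neq v$, $w\preq v$); and $w$ is comparable to $u=\max(S')$. If $w\prec u$ then $w\in V_u$ and $S'\cup\{w\}$ would be independent in $G_u$, contradicting maximality of $S'$ by the induction hypothesis; if $u\prec w$ then $u\prec w\prec v$ contradicts $u\cov v$. Hence no such $w$ exists and $S$ is maximal independent in $G_v$. The case $v=\top$ is identical except that $\top\notin V$, so we only need $S\in B(u)$ for $u\cov\top$ to be a maximal independent set of $G_u=G_{\top}$ with the maximality argument applied in $G$; the condition $u\cov\top$ again rules out an element strictly between $u$ and $\top$, giving maximality in all of $G$.

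The main obstacle is the maximality half of the argument: unlike in Lemma~\ref{l:is in=>is}, where it sufficed to know $S$ is a chain, here one must rule out \emph{every} possible extending vertex $w\preq v$, and the key point is that the cover relation $u\cov v$ is exactly what forbids a ``gap'' vertex $u\prec w\prec v$, while the induction hypothesis forbids a vertex $w\prec u$. I would make sure to phrase this so that the two sub-cases ($w$ below $u$, $w$ strictly between $u$ and $v$) are clearly exhaustive given that $w$ must be comparable to both $u$ and $v$. Everything else — chain-ness, transitivity, membership of $v$ — transfers verbatim from the proof of Lemma~\ref{l:is in=>is}, so I would keep that part terse and refer back to it.
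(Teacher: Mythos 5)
Your proof is correct, and it follows the same induction along the linear extension as the paper, with the same base case and the same decomposition $S=S'\cup\{v\}$ for $S'\in B(u)$, $u\cov v$. The one real difference is the final step: the paper concludes by observing that appending $v$ to the tight chain $S\sm\{v\}$ (whose maximum is $u$, with $u\cov v$) yields a tight chain of $(V_v,\prec)$, and then invokes the tight-chain characterisation of Lemma~\ref{l:chain}\ref{mis=tc} to obtain maximality; you instead verify maximality directly, by taking a hypothetical extending vertex $w\preq v$ and splitting into the cases $w\prec u$ (excluded by the induction hypothesis) and $u\prec w\prec v$ (excluded by $u\cov v$). Those two cases are precisely the content of ``tightness'', so your argument is an unpacking of the paper's appeal to Lemma~\ref{l:chain}\ref{mis=tc}: slightly longer, but more self-contained, and it makes explicit the exhaustiveness (every such $w$ is comparable to both $u$ and $v$) that the paper leaves implicit. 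One small slip: in the $v=\top$ case you write $G_u=G_{\top}$, which is false in general since $G_u$ is a proper induced subgraph of $G$ unless every vertex is $\preq u$; however, the sentence that follows gives the intended and correct argument, namely that maximality of $S$ in $G_u$ together with $u\cov\top$ forces maximality in $G=G_{\top}$.
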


\begin{proof}
  The base of the induction is for $v=\bot$.
  Clearly $\es$ is the unique independent set of $G_{\bot}$.

  Now let $v \in V$. Every set $S \in B(v)$ contains the vertex $v$
  and vertices in $B(u)$ for $u \cov v$. By induction hypothesis $S\sm\{v\}$
  is a maximal independent set of some $G_u$. Therefore $S$ only contains
  vertices of $G_v$. Moreover $S\sm\{v\}$ is a tight chain in $(V_u,\prec)$.
  Since $u \cov v$ implies $u\prec v$ and because $\prec$ is transitive,
  $S\sm\{v\} \in B(u)$ and $u \cov v$ imply that $S$ is also a tight
  chain in $(V_v,\prec)$ and therefore a maximal independent set of $G_v$.

  A similar argument proves the assertion for $v=\top$.
\end{proof}

\begin{lemma} \label{l:mis mis=>in}
  For every $v \in V^+$ every maximal independent set $S$ of $G_v$
  satisfies $S \in B(v)$ and $v \in V$ implies $v \in S$.
\end{lemma}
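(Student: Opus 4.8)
The plan is to mirror the structure of the proof of Lemma~\ref{l:is is=>in}, but now tracking maximality via tight chains rather than independence via chains, and to proceed by induction on $\prec$ (equivalently, along the linear extension). Fix $v \in V^+$ and a maximal independent set $S$ of $G_v$. By Lemma~\ref{l:chain}\ref{mis=tc} applied to the poset $(V_v,\prec)$ — whose maximal element is $v$ — the set $S \cup \{\bot, v\}$ is a tight chain, so in particular $S$ is a chain in $(V,\prec)$ and, when $v \in V$, its unique maximal element is $v$ (this is exactly the ``$v \in V$ implies $v \in S$'' clause, since if $v \notin S$ then $S \cup \{v\}$ would still be independent in $G_v$, contradicting maximality).

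Next I would peel off the top element. Let $u$ be the element of $S^+ = S \cup \{\bot, v\}$ immediately below $v$ in the tight chain; then $u \cov v$ (this uses tightness), and $u \in V \cup \{\bot\}$. I claim $S \sm \{v\} = S'$ is a maximal independent set of $G_u$. It is independent in $G_u$ since $S'^{\,+} = S' \cup \{\bot, u\}$ is a tight chain of $(V_u,\prec)$ — it is the tight chain $S^+$ with its top element $v$ removed. For maximality in $G_u$: if some $w \preq u$, $w \notin S'$, had $S' \cup \{w\}$ independent, then $w$ is $\prec$-comparable to every element of $S'$; combined with $w \preq u \cov v$ we would get $S' \cup \{w\} \cup \{v\}$ a chain in $(V_v,\prec)$, i.e.\ $S \cup \{w\}$ independent in $G_v$, contradicting maximality of $S$. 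Hence $S' $ is a maximal independent set of $G_u$, so by the induction hypothesis $S' \in B(u)$, and since $u \cov v$ the recurrence puts $S = S' \cup \{v\} \in B(v)$. The base case $v = \bot$ is immediate, and the case $v = \top$ is handled the same way, using that $\top$ covers exactly the maximal elements of $(V,\prec)$.

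The main obstacle I anticipate is the maximality half of the inductive step: one must be careful that a ``missing'' vertex $w$ witnessing non-maximality of $S'$ in $G_u$ can genuinely be lifted to a witness of non-maximality of $S$ in $G_v$. The subtlety is that $w$ could fail to lie below $v$, or fail to be comparable to $v$ — but here both are automatic, since $w \preq u$ and $u \cov v$ force $w \prec v$, hence $w \in V_v$ and $w$ is $\prec$-comparable to $v$, so $S \cup \{w\}$ is again a chain in $(V_v,\prec)$. Once this implication is pinned down, together with Lemma~\ref{l:mis in=>mis} it follows that $B(v)$ is \emph{exactly} the set of maximal independent sets of $G_v$; an analogue of Lemma~\ref{l:is disjoint} (the unions over $u \cov v$ are disjoint, by the same adjacency/containment argument) then yields $b(v) = |B(v)|$ and, in particular, that $b(\top)$ counts the maximal independent sets of $G$.
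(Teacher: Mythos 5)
Your proof is correct and follows essentially the same route as the paper's: characterise maximal independent sets of $G_v$ as tight chains of $(V_v,\prec)$ via Lemma~\ref{l:chain}\ref{mis=tc}, peel off the top element to obtain $u \cov v$, and apply the induction hypothesis to $S \sm \{v\}$ as a maximal independent set of $G_u$. The paper compresses the key step --- that $S \sm \{v\}$ remains \emph{maximal} in $G_u$ --- into the phrase ``by the tightness stated above'', whereas you spell out why a witness $w \preq u$ of non-maximality in $G_u$ would lift to one in $G_v$; this is added detail, not a different argument.
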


\begin{proof}
  Let $S$ be a maximal independent set of $G_v$. By Lemma~\ref{l:chain},
  $S\cup\{\bot\}$ is a tight chain of $(V_v,\prec)$ with minimal element $\bot$
  and maximal element $v$. Therefore $v \in V$ implies $v \in S$.
  We show $S \in B(v)$ by induction. For $v=\bot$ we have $\es \in B(\bot)$.
  Otherwise, by induction hypothesis and the tightness stated above,
  $S\sm\{v\} \in B(u)$ for some $u \cov v$. Consequently we have $S \in B(v)$.
\end{proof}

\begin{theorem} \label{thm:mis}
  Given a cocomparability graph on $n$ vertices, we can compute the number
  of its maximal independent sets in $O(n+m^*)$ time.
\end{theorem}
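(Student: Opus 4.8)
The plan is to mirror exactly the argument used for Theorem~\ref{thm:is}, with the partial order $\prec$ replaced by its cover relation $\cov$ and with ``chain'' replaced by ``tight chain'' throughout. First I would invoke Lemmas~\ref{l:mis in=>mis} and \ref{l:mis mis=>in} to conclude that, for every $v\in V^+$, the set $B(v)$ produced by the recurrence is exactly the set of maximal independent sets $S$ of $G_v$ with $v\in S$ (for $v\in V$); in particular $B(\top)$ is the set of all maximal independent sets of $G$, since $G_\top=G$ and $\top$ lies in every tight chain of $(V^+,\prec)$. This is the ``correctness of the sets'' half and needs no new work.

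Next I would show that $b(v)=|B(v)|$ for all $v\in V^+$, by induction along the linear extension $\bot,v_1,\ldots,v_n,\top$. The only thing to check is that the union $\bigcup_{u\scov v}\{S\cup\{v\}\mid S\in B(u)\}$ defining $B(v)$ is a \emph{disjoint} union. This is the analogue of Lemma~\ref{l:is disjoint}, and I expect it to be the one genuinely new step. Two cases: if $u,u'$ are distinct with $u\cov v$ and $u'\cov v$, then by Lemma~\ref{l:mis in=>mis} every member of $B(u)$ contains $u$, and if $u\prec u'$ were to hold then $u$ would lie strictly between $u'$ and $v$ in $\prec$ (using $u'\cov v$ forbids nothing directly, so rather: $u\cov v$ and $u'\prec v$ with $u\prec u'$ would contradict $u\cov v$), so $u,u'$ are $\prec$-incomparable, hence adjacent in $G$, hence no independent set contains both — so the sets $S\cup\{v\}$ for $S\in B(u)$ and $S'\cup\{v\}$ for $S'\in B(u')$ differ. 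Also each such $S\cup\{v\}$ genuinely contains $v$ while the map $S\mapsto S\cup\{v\}$ is injective on each $B(u)$ since $v\notin S$ (as $S\subseteq V_u$ and $u\cov v$ gives $v\notin V_u$). Thus $b(v)=\sum_{u\scov v}b(u)=|B(v)|$, and in particular $b(\top)$ counts all maximal independent sets of $G$.

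Finally I would address the running time. Evaluating the recurrences for $b$ directly costs $O(n+|{\cov}|)$, where $|{\cov}|$ is the number of covering pairs, and since $\cov\subseteq\bar E$ this is $O(n+\bar m)$; applying the same ``$t$-trick'' used after Theorem~\ref{thm:is} (replace the sum over covering pairs $u\cov v_i$ by a running total $t(v_i)$ minus a sum over the \emph{non}-covering pairs with $j<i$, which are edges of $G$ together with $O(n)$ bookkeeping) brings this to $O(n+m^*)$, as the remark following Theorem~\ref{thm:is} explicitly licenses. Hence the number of maximal independent sets of $G$ is computed in $O(n+m^*)$ time. The main obstacle, as noted, is the disjointness lemma for $B$: one must be slightly careful that replacing $\prec$ by $\cov$ does not break the two-case analysis, but the key point — that distinct covers $u,u'$ of a common $v$ are $\prec$-incomparable, hence $G$-adjacent — goes through unchanged.
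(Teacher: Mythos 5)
Your proposal is correct and follows essentially the same route as the paper: correctness of the sets $B(v)$ via Lemmas~\ref{l:mis in=>mis} and \ref{l:mis mis=>in}, disjointness of the unions to get $b(v)=|B(v)|$, and the same $t$-based evaluation for the $O(n+m^*)$ bound. The only (harmless) difference is that you reprove disjointness from scratch by showing distinct covers of $v$ are $\prec$-incomparable, whereas the paper simply observes $B(u)\subseteq A(u)$ and invokes Lemma~\ref{l:is disjoint}.
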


\begin{proof}
  Lemmas~\ref{l:mis in=>mis} and \ref{l:mis mis=>in} imply that, for all
  $v \in V^+$, the sets $B(v)$ defined by the recurrence contains indeed
  all maximal independent sets of $G_v$, especially $B(\top)$ is the set
  of all maximal independent sets of $G$.

  It remains to show that $b(v) = |B(v)|$. Again, the unions over all
  $u \cov v$ are always disjoint because $B(u) \subseteq A(u)$ holds for all
  $u$, and by Lemma \ref{l:is disjoint}.

  The time analysis is essentially the same as in the proof of Theorem~\ref{thm:is}.
\end{proof}
Corresponding to Theorem~\ref{thm:cliques}, we also have the following
\begin{theorem} \label{thm:maxcliques}
  It is \emph{\#P}-complete to count maximal cliques in a cocomparability graph.
\end{theorem}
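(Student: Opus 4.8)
The statement is: counting maximal cliques in a cocomparability graph is \#P-complete. The natural approach mirrors the proof of Theorem \ref{thm:cliques}: reduce from a known \#P-complete counting problem via complementation and a suitable subclass.

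First I would observe that maximal cliques in a graph $G$ correspond exactly to maximal independent sets in the complement $\bar G$. Since $G$ is cocomparability iff $\bar G$ is comparability, counting maximal cliques in cocomparability graphs is equivalent to counting maximal independent sets in comparability graphs. As in Theorem \ref{thm:cliques}, bipartite graphs form a subclass of comparability graphs, so it suffices to show that counting maximal independent sets in bipartite graphs is \#P-complete.

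For that final ingredient I would invoke the literature rather than reprove it: counting maximal independent sets in bipartite graphs is known to be \#P-complete (this is essentially equivalent to counting minimal vertex covers, or to counting maximal antichains in a poset of height two, cf.\ \cite{PrBa83,Vadhan}). Alternatively, one can give a direct parsimonious reduction from \#SAT or from counting independent sets in bipartite graphs by a standard gadget that forces maximality; but since the excerpt already relies on \cite{PrBa83,Vadhan} for the non-maximal case, the cleanest route is to cite the corresponding maximal-version result. Membership in \#P is immediate, since a maximal clique can be verified in polynomial time.

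The main obstacle is making sure the cited \#P-completeness result is genuinely for the \emph{maximal} (not maximum, not all) independent sets in bipartite graphs, and that the reduction used there is parsimonious (or at least weakly parsimonious) so that counting is preserved. If no such off-the-shelf result is available in the exact form needed, the fallback is to supply a short self-contained reduction: take an instance of counting maximal independent sets in a general graph $H$ (known \#P-complete), subdivide every edge, and argue that maximal independent sets of the bipartite subdivision graph biject with those of $H$ — the only delicate point being the treatment of subdivision vertices, which are free to be included whenever both endpoints are excluded, so one must set up the gadget (e.g., doubling each subdivision vertex, or attaching a pendant) to remove that ambiguity. I would expect to spend most of the effort verifying this bijection rather than on the complementation argument, which is routine.

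\begin{proof}
Maximal cliques in $G$ are precisely the maximal independent sets of $\bar G$, and $G$ is a cocomparability graph if and only if $\bar G$ is a comparability graph. Hence counting maximal cliques in cocomparability graphs is equivalent to counting maximal independent sets in comparability graphs. Bipartite graphs are a subclass of comparability graphs, and counting maximal independent sets in bipartite graphs is \#P-complete~\cite{PrBa83,Vadhan}. Membership in \#P is clear, since maximality of a clique is checkable in polynomial time.
\end{proof}
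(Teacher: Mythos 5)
Your proposal is correct and follows exactly the paper's argument: complementation reduces the problem to counting maximal independent sets in comparability graphs, and bipartite graphs supply the hard subclass, just as in Theorem~\ref{thm:cliques}. The one point you flagged as a risk is real but easily fixed: the \#P-completeness of counting maximal independent sets in bipartite graphs is not taken from \cite{PrBa83,Vadhan} but from \cite{maxis}, which is the reference the paper uses for precisely this maximal version.
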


\begin{proof}
By the same argument as Theorem~\ref{thm:cliques}, using that it is \#P-complete to count
maximal independent sets in bipartite graphs~\cite{maxis}.
\end{proof}

\subsection{Independent sets of size \boldmath$k$} \label{ss:isk}

Next we consider independent sets of size exactly $k$ for some fixed value of
$k \in [0,n]$. For every vertex $v \in V^+$ and every integer $i \in [0,k]$,
the set $C(v,i)$ of independent sets $S$ of $G_v$ with $v \in S$ and $|S|=i$,
and size $c(v,i)$, satisfy the following recurrences:
\begin{align*}
  C(\bot,0) &= \{\es\} &\quad
  c(\bot,0) &= 1 \\
  C(\bot,i) &= \es &
  c(\bot,i) &= 0   && (i \in [1,k])\\
  C(v_j,0)    &= \es &
  c(v_j,0)    &= 0   && (j \in [1,n]) \\
  C(v_j,i)    &= \bigcup_{u \prec v_j} \big\{S\cup\{v_j\} \mid S \in C(u,i-1)\big\} &
  c(v_j,i)    &= \sum_{u \prec v_j} c(u,i-1)
            && (i \in [1,k],\ j \in [1,n]) \\
  C(\top,i) &= \bigcup_{u \prec \top} C(u,i) &
  c(\top,i) &= \sum_{u \prec \top} c(u,i) && (i \in [0,k])
\end{align*}
The correctness of these recurrences is again based on the fact that,
for every $v \in V$ and every $i \in [1,k]$, every independent set
$S$ of $G_v$ with $v \in S$ and $|S|=i$ there is an independent set
$S'$ of $G_u$ of size $i-1$ for some $u \prec v$, where $u=\bot$ if $S'=\es$
and otherwise $u$ is the $\prec$-maximal vertex in $S'$. For $v=\top$ we
have $|S|=|S'|$ because $\top \notin V$. Then we have the following.

\begin{lemma} \label{l:isk in=>is}
  For all $v \in V$ and all $i\in[1,k]$ every set $S \in C(v,i)$ is a
  independent set of $G_v$ with $|S|=i$ and $v \in S$.
\end{lemma}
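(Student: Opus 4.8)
The plan is to argue by induction on $\prec$ (equivalently, along the linear extension $v_1,\dots,v_n$), carrying the extra size parameter through the same argument that proves Lemma~\ref{l:is in=>is}. I would fix $v = v_j \in V$ and $i \in [1,k]$, assume the statement for every $v_{j'} \in V$ with $j' < j$ (hence for every $u \in V$ with $u \prec v_j$), and note that by the recurrence any $S \in C(v_j,i)$ has the form $S = S' \cup \{v_j\}$ with $u \prec v_j$ and $S' \in C(u,i-1)$. It then remains to treat the two cases $u = \bot$ and $u \in V$.

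For $u = \bot$, the explicit values $C(\bot,0) = \{\es\}$ and $C(\bot,\ell) = \es$ for $\ell \ge 1$ force $i-1 = 0$, so $i = 1$, $S' = \es$, and $S = \{v_j\}$, which is trivially independent in $G_{v_j}$, of size $1$, and contains $v_j$. For $u \in V$, the induction hypothesis says $S'$ is independent in $G_u$ with $|S'| = i-1$ and $u \in S'$. Then I would use Lemma~\ref{l:chain}\ref{is=c} to see that $S'$ is a chain of $(V,\prec)$; since $S' \subseteq \{w \in V : w \preq u\}$ and $u \in S'$, the element $u$ is $\prec$-maximal in $S'$, so by transitivity $w \prec v_j$ for every $w \in S'$. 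This gives $v_j \notin S'$, hence $|S| = |S'|+1 = i$; it also shows $S = S' \cup \{v_j\}$ is a chain of $(V,\prec)$ all of whose members are $\preq v_j$, so $S$ lies in the vertex set of $G_{v_j}$ and, consisting of pairwise $\prec$-comparable (thus pairwise non-adjacent) vertices, is independent in $G_{v_j}$. Since $v_j \in S$ by construction, the induction closes.

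The only content beyond Lemma~\ref{l:is in=>is} is the cardinality bookkeeping, and that is where I expect the (minor) obstacle to lie: one must make sure that appending $v_j$ to $S' \in C(u,i-1)$ genuinely increases the size, i.e.\ that $v_j \notin S'$. This is exactly what the chain structure supplies — $u = \max(S')$ together with $u \prec v_j$ — and it is also what pins the $u = \bot$ case down to $i = 1$. No separate induction on $i$ is required, since the recurrence already passes to a strictly $\prec$-smaller vertex (or to $\bot$, where the values are given outright); everything else transcribes the earlier proof essentially verbatim.
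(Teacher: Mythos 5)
Your argument is correct and is essentially the proof the paper intends: the paper omits an explicit proof of Lemma~\ref{l:isk in=>is}, deferring to the induction on $\prec$ used for Lemma~\ref{l:is in=>is}, and your write-up is exactly that induction with the size bookkeeping (via $u=\max(S')$ and $u\prec v_j$, so $v_j\notin S'$) carried along. Nothing further is needed.
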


\begin{lemma} \label{l:isk is=>in}
  For every $i \in [1,k]$ every independent set $S$ of size $i$
  is contained in $C(\max(S),i)$.
\end{lemma}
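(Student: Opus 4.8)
The statement to prove is: for every $i \in [1,k]$, every independent set $S$ of size $i$ in $G$ (equivalently, in $G_\top$) is contained in $C(\max(S),i)$. The plan is to mirror exactly the proof of Lemma~\ref{l:is is=>in}, the size-unrestricted analogue, adding a bookkeeping of cardinality and inducting on $\prec$. First I would observe, using Lemma~\ref{l:chain}\ref{is=c}, that since $S$ is independent in $G$, the set $S$ is a chain of $(V,\prec)$, hence has a unique $\prec$-maximal element $v=\max(S)$; note $v\in V$ because $S\ne\es$ (as $i\ge 1$). The goal is then to show $S \in C(v,i)$, which by the recurrence $C(v_j,i) = \bigcup_{u\prec v_j}\{S'\cup\{v_j\}\mid S'\in C(u,i-1)\}$ reduces to exhibiting some $u\prec v$ with $S\sm\{v\}\in C(u,i-1)$.

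Next I would split on whether $S\sm\{v\}$ is empty. If $S\sm\{v\}=\es$, then $i=1$, and $S\sm\{v\}=\es\in C(\bot,0)$ with $\bot\prec v$ by construction of $(V^+,\prec)$; so $S = \es\cup\{v\}\in C(v,1)$. Otherwise $S\sm\{v\}$ is a nonempty independent set of $G$ of size $i-1$, so by the induction hypothesis it lies in $C(\max(S\sm\{v\}),i-1)$; writing $u=\max(S\sm\{v\})$, the fact that $v$ is the unique maximum of the chain $S$ gives $u\prec v$, and therefore $S = (S\sm\{v\})\cup\{v\}\in C(v,i)$. This closes the induction. Since $\max(S)$ ranges over $V$, and for $v=\top$ the recurrence $C(\top,i)=\bigcup_{u\prec\top}C(u,i)$ simply collects these, every independent set of size $i$ appears in $C(\top,i)$ as well.

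I do not expect a genuine obstacle here; the only point requiring a little care is the base/degenerate case $i=1$ (so that $S\sm\{v\}=\es$ is handled by the $C(\bot,0)$ clause rather than by the inductive hypothesis), together with confirming that the "$u\prec v$" condition in the recurrence is met — this is exactly where uniqueness of the maximum of a chain, i.e.\ Lemma~\ref{l:chain}\ref{is=c}, is used. Combined with Lemma~\ref{l:isk in=>is} (every member of $C(v,i)$ is conversely an independent set of $G_v$ of size $i$ containing $v$) and the disjointness of the unions over $u\prec v_j$ for distinct $u$ — which follows from Lemma~\ref{l:is disjoint} restricted to size-$(i-1)$ sets, since $C(u,i-1)\subseteq A(u)$ — one obtains $c(v,i)=|C(v,i)|$ and hence that $c(\top,k)$ is the number of independent sets of size $k$ in $G$, to be recorded in the corresponding theorem.
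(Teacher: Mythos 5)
Your proof is correct and follows exactly the route the paper intends: the paper gives no separate proof of this lemma, relying on the remark preceding it (peel off the $\prec$-maximal element $v$ of the chain $S$, apply the inductive hypothesis to $S\sm\{v\}$ of size $i-1$, with the $\bot$ clause handling the case $i=1$), which is precisely your argument, itself a direct adaptation of the proof of Lemma~\ref{l:is is=>in}. The additional remarks on disjointness via $C(u,i-1)\subseteq A(u)$ and Lemma~\ref{l:is disjoint} likewise match what the paper does for the counting step.
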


\begin{theorem} \label{thm:isk}
  Given a cocomparability graph on $n$ vertices and a number $k \in [0,n]$, we
  can compute the number of its independent sets of size exactly $k$ in time
  $O(k(n+m^*))$. In time $O(n^2+nm^*)$ we can do this for all $k \in [0,n]$.
\end{theorem}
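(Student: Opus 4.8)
The plan is to verify that the recurrences for $c(v,i)$ faithfully count $C(v,i)$, and then bound the arithmetic cost. First I would establish correctness exactly as in the unparametrised case: Lemma~\ref{l:isk in=>is} shows each $S \in C(v,i)$ is an $i$-element independent set of $G_v$ containing $v$, and Lemma~\ref{l:isk is=>in} shows every $i$-element independent set appears in $C(\max(S),i)$; together these give that $C(\top,i)$ is precisely the set of $i$-element independent sets of $G$. The only extra ingredient needed is that the unions defining $C(v_j,i)$ and $C(\top,i)$ are disjoint, so that $c(v,i) = |C(v,i)|$. This follows from Lemma~\ref{l:is disjoint} essentially verbatim: for distinct $u, u' \prec v_j$ the sets $C(u,i-1)$ and $C(u',i-1)$ are subsets of $A(u)$ and $A(u')$ respectively, hence disjoint, and appending $v_j$ preserves disjointness; the same applies at $\top$. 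Thus $c(\top,k)$ is the number of independent sets of size $k$.

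Next I would analyse the running time. Fix $k$. Using the naive form of the recurrence, for each $i \in [1,k]$ and each $v_j$ we sum $c(u,i-1)$ over $u \prec v_j$, and for $\top$ over all $u \prec \top$; summed over $j$ this is $O(n+\bar m)$ arithmetic operations per value of $i$, for a total of $O(k(n+\bar m))$. As remarked after Theorem~\ref{thm:is}, the same $t$-based trick that replaces sums over $\bar E$ by sums over $E$ applies here layer by layer: for each fixed $i$ one introduces partial sums $t(v_j,i) = c(\bot,i) + \sum_{\ell < j} c(v_\ell,i)$ and recovers $c(v_j,i)$ from $t(v_j,i)$ by subtracting $\sum_{\ell<j,\,v_\ell \nprec v_j} c(v_\ell,i-1)$ over the edges of $E$. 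This brings each layer down to $O(n+m^*)$, so the total is $O(k(n+m^*))$ for a single $k$.

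Finally, for all $k \in [0,n]$ simultaneously, I would observe that one does not recompute from scratch for each target size: a single pass computing the full table $c(v_j,i)$ for $i \in [0,n]$ and all $v_j$ suffices, since the recurrence for layer $i$ only uses layer $i-1$. That table has $O(n^2)$ entries, and filling it costs $O(n+m^*)$ arithmetic operations per layer (using the $t$-trick), hence $O(n^2 + nm^*)$ overall; reading off $c(\top,k)$ for every $k$ is then free. I expect no serious obstacle here — the argument is a routine lifting of the $k$-free case — but the one point requiring a little care is confirming that the partial-sum reformulation is compatible with the extra index $i$, i.e.\ that within layer $i$ the roles of $t$ and $a$ from Theorem~\ref{thm:is} are played by $t(\cdot,i)$ and $c(\cdot,i)$ with the edge-sum correction referring to $c(\cdot,i-1)$, so that each edge of $E$ is touched once per layer and the claimed bounds follow.
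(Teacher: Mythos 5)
Your proof is correct and follows essentially the same route as the paper: correctness via Lemmas~\ref{l:isk in=>is} and \ref{l:isk is=>in} together with the disjointness of Lemma~\ref{l:is disjoint} applied to $C(u,i-1)\subseteq A(u)$, and the layer-by-layer partial-sum trick for the running time, which the paper delegates to the remark following Theorem~\ref{thm:is}. If anything your write-up is more careful than the paper's own proof of this theorem, which appears to have been copied from Theorem~\ref{thm:is} with $A$ and $a$ left in place of $C$ and $c$ and with the time bound stated only as $O(n^2)$.
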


\begin{proof}
  Lemmas~\ref{l:is in=>is} and \ref{l:is is=>in} imply that, for all
  $v \in V^+$, the sets $A(v)$ defined by the recurrence contains indeed
  all independent sets of $G_v$, especially $A(\top)$ is the set of all
  independent sets of $G$. It remains to show that $a(v) = |A(v)|$.
  To see this we observe that the unions over all $u\prec v$ are always disjoint
  by Lemma \ref{l:is disjoint}.

  Our algorithm just evaluates the sums for $a(v)$ for all $v \in V^+$ in
  an order that is a linear extension of $\prec$. Then $a(\top)$ is the number
  of independent sets of $G$. Assuming additions can be performed in
  constant time, this takes $O(n^2)$ time.
\end{proof}

As a by-product the algorithm of Theorem~\ref{thm:isk} computes the size
$\alpha(G)$ of a maximum independent set of $G$, which is
$\max\{i \mid c(\top,i)>0\}$, and the number of maximum independent set
in $G$, which is $c(\top,\alpha(G))$. Since every maximum independent set
is also maximal, the algorithm from Theorem~\ref{thm:misk} can be used as
well and should be faster on average.

Once the number $c(\top,i)$ of independent sets of size exactly $i$ has been
determined for all $i \in [0,\alpha(G)]$ we can evaluate the independent set
polynomial $\sum_{i=0}^{\alpha(G)} c(\top,i)x^i$ for all values of $x$.

We also have the following fixed-parameter hardness result for counting $k$-cliques in cocomparability graphs.
\begin{theorem} \label{thm:kcliques}
  It is \emph{\#W[1]}-complete to count $k$-cliques in a cocomparability graph.
\end{theorem}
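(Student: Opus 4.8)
The plan is to prove both that the problem lies in \#W[1] and that it is \#W[1]-hard.

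Membership is immediate: counting $k$-cliques in a cocomparability graph is a restriction of counting $k$-cliques in an arbitrary graph, and the latter is the canonical \#W[1]-complete problem (Flum and Grohe). For hardness I would argue exactly as for Theorems~\ref{thm:cliques} and~\ref{thm:maxcliques}. Taking complements is a bijection between cocomparability graphs and comparability graphs that carries $k$-cliques to independent sets of size $k$; hence counting $k$-cliques in cocomparability graphs is parsimoniously equivalent to counting independent sets of size $k$ in comparability graphs. Since bipartite graphs are comparability graphs, it suffices to prove that counting independent sets of size $k$ in bipartite graphs is \#W[1]-hard.

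To prove this I would reduce from the \#W[1]-complete problem of counting, in a $k$-partite graph $G$ with parts $V_1,\dots,V_k$, the cliques that use exactly one vertex from each part. This count is the sum over all $(v_1,\dots,v_k)\in V_1\times\dots\times V_k$ of the product over the pairs $i<j$ of the indicator that $v_iv_j$ is an edge. Writing each indicator as $1$ minus the indicator that $v_iv_j$ is a non-edge and expanding the product turns the count into a $\pm1$-weighted sum, over the $2^{\binom{k}{2}}$ subsets $S$ of pairs, of the numbers of tuples $(v_1,\dots,v_k)$ that are pairwise non-adjacent along every pair in $S$. Each such number is the number of size-$k$ rainbow independent sets of the graph $G_S$ obtained from $G$ by deleting every edge except those joining $V_i$ and $V_j$ for $\{i,j\}\in S$; and by a further inclusion--exclusion over which of the $k$ parts are actually met, each rainbow count becomes a $\pm1$-combination of ordinary independent-set counts, of sizes at most $k$, in induced subgraphs of the graphs $G_S$. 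Every quantity appearing has parameter a function of $k$ only, so this is an fpt Turing reduction.

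The only step that is not routine is reducing the independent-set count of such a graph $F$ (which is still not bipartite) to independent-set counts of genuinely bipartite graphs, and I expect this to be the main obstacle. The naive idea of embedding $F$ as an induced subgraph of a bipartite graph is impossible, because the class of bipartite graphs (indeed of comparability graphs) is closed under induced subgraphs and omits $C_5$; so the reduction must be indirect. I would instead build, for each size parameter $t$, a bipartite graph obtained from $F$ by subdividing edges and attaching gadgets of size $t$, arranged so that its size-$\ell$ independent-set count is a polynomial in $t$ from which the target count is recoverable by interpolation over $t$. The delicate parts are finding gadgets that actually expose the independence structure (plain subdivision, for instance, only records degree and small-subgraph information) and checking that the resulting Vandermonde- or triangular-type systems are non-singular. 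Alternatively, the \#W[1]-hardness of counting size-$k$ independent sets in bipartite graphs may be quoted from the literature on parameterized counting of induced subgraphs, in which case the present proof reduces to the complementation argument of the second paragraph. The remaining verifications are routine.
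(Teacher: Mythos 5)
Your overall route is exactly the paper's: complementation turns $k$-cliques in cocomparability graphs into size-$k$ independent sets in comparability graphs, bipartite graphs are comparability graphs, and everything then rests on the \#W[1]-hardness of counting size-$k$ independent sets in bipartite graphs. The paper does not prove that last ingredient; it simply cites it as \cite[Theorem 4]{CDFGL}, which is the ``alternatively, quote it from the literature'' option you offer in your final sentence. With that citation in place your argument is complete and coincides with the paper's proof.

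The self-contained version you sketch instead, however, has a genuine gap, and you have correctly located it yourself. The expansion of the multicolored-clique count into $\pm1$-combinations of independent-set counts in the graphs $G_S$ is fine, but those graphs are $k$-partite, not bipartite (whenever $S$ contains three mutually overlapping pairs, $G_S$ can contain odd cycles), and the step that would convert independent-set counts in such graphs into independent-set counts in \emph{bipartite} graphs is only gestured at: you propose subdivision-plus-gadget constructions with interpolation in a gadget-size parameter $t$, but you do not exhibit a gadget, nor verify that the resulting linear system is invertible. This is not a routine verification --- it is essentially the entire content of the cited theorem of Curticapean, Dell, Fomin, Goldberg and Lapinskas, whose proof goes through a rather different route (counting weighted homomorphisms / induced subgraphs) precisely because naive subdivision, as you note, destroys the independence structure. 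So as a proof from first principles the proposal is incomplete at its key step; as a proof by citation it matches the paper.
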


\begin{proof}
By the same argument as Theorem~\ref{thm:cliques}, using that it is \#W[1]-complete to count
independent sets of size $k$ in bipartite graphs~\cite[Theorem 4]{CDFGL}.
\end{proof}

\subsection{Maximal independent sets of size \boldmath$k$} \label{ss:misk}

For every vertex $v \in V^+$ and every integer $i \in [0,k]$, let
$D(v,i)$ be the set of maximal independent sets $S$ of $G_v$
with $v \in S$ and $|S|=i$, and $d(v,i) = |D(v,i)|$. We have
\begin{align*}
  D(\bot,0) &= \{\es\} &
  d(\bot,0) &= 1 \\
  D(\bot,i) &= \es &
  d(\bot,i) &= 0   && (i \in [1,k]) \\
  D(v_j,0)    &= \es &
  d(v_j,0)    &= 0   && (j \in [1,n] )\\
  D(v,i)    &= \bigcup_{u \scov v} \big\{S\cup\{v\} \mid S \in D(u,i-1)\big\} &
  d(v,i)    &= \sum_{u \scov v} d(u,i-1)
            && (i \in [1,k], j \in [1,n]) \\
  D(\top,i) &= \bigcup_{u \scov \top} D(u,i) &
  d(\top,i) &= \sum_{u \scov \top} d(u,i) && (i \in [0,k])
\end{align*}
The following are proved similarly to the correponding results above.
\begin{lemma} \label{l:misk in=>mis}
  For all $v \in V$ and all $i\in[1,k]$ every set $S \in D(v,i)$ is a
  maximal independent set of $G_v$ with $|S|=i$ and $v \in S$.
\end{lemma}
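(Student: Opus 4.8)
The plan is to mirror the proof of Lemma~\ref{l:misk in=>mis} onto the setting of Lemma~\ref{l:isk in=>is}, but with the cover relation in place of $\prec$ and with a tightness requirement added to the chain argument. I would prove it by induction along the linear extension of $\prec$, exactly as the earlier lemmas do, and I would expect the statement to be paired (in the paper's unstated continuation) with a converse lemma of the form: \emph{for every $i\in[1,k]$ every maximal independent set $S$ of $G$ of size $i$ is contained in $D(\max(S),i)$}, from which a theorem giving the $O(k(n+m^*))$ running time would follow. Below I write only the proof of the displayed lemma, Lemma~\ref{l:misk in=>mis}.

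First I would dispose of the base cases: for $v=\bot$ the only relevant set is $\es\in D(\bot,0)$, and there is nothing to check for $i\ge1$ since $D(\bot,i)=\es$; likewise $D(v_j,0)=\es$ for every $j$, so the claim is vacuous at size $0$. For the inductive step, fix $v=v_j\in V$ and $i\in[1,k]$ and take $S\in D(v,i)$. By the recurrence $S=S'\cup\{v\}$ for some $u\scov v$ and some $S'\in D(u,i-1)$, so in particular $v\in S$, and $|S|=|S'|+1=i$ since the induction hypothesis gives $|S'|=i-1$ and $v\notin S'$ (as $S'\subseteq V_u$ and $u\prec v$). The induction hypothesis also gives that $S'$ is a maximal independent set of $G_u$, so by Lemma~\ref{l:chain}\ref{mis=tc} applied inside $(V_u^+,\prec)$ the set $S'\cup\{\bot\}$ is a tight chain with maximal element $u$. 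Adjoining $v$: since $u\cov v$ there is no $w\in V$ with $u\prec w\prec v$, so $S'\cup\{\bot,v\}$ remains a tight chain of $(V_v^+,\prec)$ with minimal element $\bot$ and maximal element $v$; transitivity of $\prec$ ensures it is indeed a chain. By the other direction of Lemma~\ref{l:chain}\ref{mis=tc}, $S$ is a maximal independent set of $G_v$, completing the induction. The case $v=\top$ is handled by the same argument, noting that $\top\notin V$ so the size is unchanged and $\top$ is $\cov$-above exactly the $\cov$-maximal vertices of $(V,\prec)$, which is what the recurrence $D(\top,i)=\bigcup_{u\scov\top}D(u,i)$ records.

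The only point requiring care—and the one I would flag as the main obstacle—is the tightness bookkeeping at the cover step: one must check that replacing the top element $u$ of a tight chain by $v$ with $u\cov v$ does not create a gap that some vertex of $V$ could fill, i.e.\ that maximality in $G_v$ is genuinely preserved rather than merely independence. This is exactly where the switch from $\prec$ (used for $C$, $c$) to $\cov$ (used for $D$, $d$) does its work, and it is the reason the recurrence for $d$ sums over $u\scov v$ rather than $u\prec v$; the argument is the size-graded analogue of the remark following the display for $B$, $b$ in Section~\ref{ss:mis}. Everything else is a routine transcription of Lemmas~\ref{l:mis in=>mis} and \ref{l:isk in=>is}.
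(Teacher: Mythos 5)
Your proof is correct and follows exactly the route the paper intends: the paper omits an explicit proof of Lemma~\ref{l:misk in=>mis}, stating only that it is ``proved similarly to the corresponding results above,'' and your argument is precisely that combination of the induction with tight chains from Lemma~\ref{l:mis in=>mis} and the size bookkeeping from Lemma~\ref{l:isk in=>is}. The converse you anticipate is in fact stated in the paper as Lemma~\ref{l:misk mis=>in}, and your handling of the cover-step tightness matches the paper's own (equally terse) treatment.
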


\begin{lemma} \label{l:misk mis=>in}
  For every $v \in V^+$ and $i \in [1,k]$ every maximal independent set $S$ of
  size $i$ in $G_v$
  is contained in $D(v,i)$, and $v \in V$ implies $v \in S$.
\end{lemma}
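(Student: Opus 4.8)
The plan is to mirror the inductive arguments already used for $B$, $b$ and for $C$, $c$, combining the ``tightness'' bookkeeping from the maximal-independent-set case with the cardinality parameter $i$ from the size-$k$ case. Concretely, Lemma~\ref{l:misk mis=>in} is the analogue of Lemma~\ref{l:mis mis=>in} with the extra requirement $|S|=i$ tracked through the recursion, so I would prove it by induction on $\prec$ (equivalently, along the linear extension), exactly as Lemma~\ref{l:mis mis=>in} was proved.

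First I would note that by Lemma~\ref{l:chain}\ref{mis=tc}, $S$ being a maximal independent set of $G_v$ means $S^+$ (relative to $V_v$) is a tight chain of $(V_v,\prec)$ with least element $\bot$ and greatest element $v$; in particular $v\in V$ forces $v\in S$, which disposes of the second assertion. For the membership $S\in D(v,i)$ I would argue: the base case $v=\bot$ gives $S=\es$, $i=0$, and $\es\in D(\bot,0)$. For the inductive step with $v\in V$, since $S^+$ is a tight chain ending in $v$, deleting $v$ leaves a tight chain whose top element is some $u$ with $u\cov v$ (or $u=\bot$ if $S=\{v\}$); by Lemma~\ref{l:chain} again, $S\sm\{v\}$ is a maximal independent set of $G_u$, and it has size $i-1$. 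The induction hypothesis gives $S\sm\{v\}\in D(u,i-1)$, and then the defining recurrence $D(v,i)=\bigcup_{u\scov v}\{S'\cup\{v\}\mid S'\in D(u,i-1)\}$ yields $S\in D(v,i)$. The case $v=\top$ is handled the same way, except that $\top\notin V$ so $|S\sm\{\top\}|=|S|=i$ rather than $i-1$, matching the recurrence $d(\top,i)=\sum_{u\scov\top}d(u,i)$.

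I expect the only real subtlety — and hence the main obstacle — to be the careful handling of the tightness condition when passing between $G_v$ and $G_u$: one must check that $u\cov v$ in the full poset $(V^+,\prec)$ is exactly the condition under which $S'\cup\{v\}$ remains a tight chain when $S'$ is a tight chain of $(V_u,\prec)$ topped by $u$. This is precisely the point already flagged in the text after the $B$, $b$ recurrence (``$S'\cup\{v\}$ is maximal independent if and only if there is no $w$ that is $\prec$-between $u$ and $v$''), so it transfers verbatim; the cardinality index $i$ rides along passively and contributes nothing new beyond decrementing by one at each non-$\top$ step.

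Having established Lemma~\ref{l:misk mis=>in} together with its converse Lemma~\ref{l:misk in=>mis} (proved by the symmetric induction, as in Lemma~\ref{l:mis in=>mis}), the corresponding counting statement follows as before: the unions over $u\cov v$ in the recurrence for $D$ are pairwise disjoint because $D(u,i)\subseteq A(u)$ and Lemma~\ref{l:is disjoint} applies, so $d(v,i)=|D(v,i)|$ for all $v\in V^+$ and all $i\in[0,k]$, and in particular $d(\top,k)$ counts the maximal independent sets of $G$ of size exactly $k$; the time bound $O(k(n+m^*))$, and $O(n^2+nm^*)$ for all $k$ simultaneously, is obtained exactly as in Theorem~\ref{thm:isk}, using the same reformulation of the recurrences over the edges of $E$ rather than $\bar E$.
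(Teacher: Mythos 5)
Your proof is correct and follows essentially the same route the paper intends: the paper omits an explicit proof, stating only that this lemma is ``proved similarly to the corresponding results above,'' and your argument is exactly the induction from Lemma~\ref{l:mis mis=>in} (tight chain via Lemma~\ref{l:chain}, strip the top element $v$ to land in $D(u,i-1)$ for some $u\cov v$, apply the recurrence) with the cardinality index carried along passively. No gaps; the extra material on disjointness and running time reproduces the paper's treatment in Theorems~\ref{thm:mis} and~\ref{thm:isk}.
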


\begin{theorem} \label{thm:misk}
  Given a cocomparability graph on $n$ vertices and a number $k \in [0,n]$, we
  can compute the number of its maximal independent sets of size exactly $k$
  in time $O(k(n+m^*))$. In time $O(n^2+nm^*)$ we can do this for all $k \in [0,n]$.
\end{theorem}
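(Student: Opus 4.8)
The plan is to run, in parallel, the two arguments that are already in place: the one behind Theorem~\ref{thm:mis} (maximal independent sets, in which $\prec$ is replaced by its cover relation $\cov$) and the one behind Theorem~\ref{thm:isk} (independent sets of size $k$, in which an extra index $i$ tracks the size). The recurrence for $D(v,i)$ and $d(v,i)$ is exactly the merge of those two recurrences, so its justification should be the corresponding merge of the two justifications, with no genuinely new ingredient.

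First I would invoke Lemmas~\ref{l:misk in=>mis} and~\ref{l:misk mis=>in} to say precisely what the recurrence computes: for $v\in V$, $D(v,i)$ is exactly the family of maximal independent sets of $G_v$ of size $i$ containing $v$. Their inductive proof (along a linear extension of $\prec$) blends the proofs of Lemmas~\ref{l:mis in=>mis}/\ref{l:mis mis=>in} with those of Lemmas~\ref{l:isk in=>is}/\ref{l:isk is=>in}: a set in $D(v,i)$ is of the form $S'\cup\{v\}$ with $S'\in D(u,i-1)$ and $u\cov v$, so by induction $S'\cup\{\bot\}$ is a tight chain of $(V_u,\prec)$ with maximal element $u$ and $|S'|=i-1$; since $u\cov v$ and $v$ is the maximum of $V_v$, the set $S'\cup\{\bot,v\}$ is a tight chain of $(V_v,\prec)$ with minimal element $\bot$ and maximal element $v$, hence $S'\cup\{v\}$ is a maximal independent set of $G_v$ by Lemma~\ref{l:chain}, and it has size $i$ because $v\notin S'$. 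Conversely, a maximal independent set $S$ of $G_v$ of size $i$ contains $v$ (it is a tight chain ending at $v$), and deleting $v$ leaves a tight chain ending at the covered predecessor $u\cov v$, which by induction lies in $D(u,i-1)$, so $S\in D(v,i)$. For $v=\top$ the size is not incremented; here one uses the observation already implicit in Lemma~\ref{l:chain} that the $\prec$-maximal vertex of a maximal independent set of $G$ is $\prec$-maximal in $V$ and hence covered by $\top$ in $(V^+,\prec)$, which makes $D(\top,i)=\bigcup_{u\cov\top}D(u,i)$ equal to the set of all size-$i$ maximal independent sets of $G$.

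To pass from $D$ to $d$ I would argue, as in the proof of Theorem~\ref{thm:mis}, that every union in the recurrence for $D$ is disjoint: $D(u,i-1)\subseteq B(u)\subseteq A(u)$ for every $u$, and by Lemma~\ref{l:is disjoint} the sets $A(u)$ with $u\in V\cup\{\bot\}$ are pairwise disjoint; moreover each element $S'\cup\{v\}$ with $S'\in D(u,i-1)$, $u\cov v$, has size exactly $i$ and determines $u$ (delete $v$, read off the maximum). Hence $d(v,i)=|D(v,i)|$ for all $v$ and $i$, and in particular $d(\top,k)$ is the required count.

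For the running time, fix $k$ and evaluate $d(v,i)$ with $v$ ranging over $V^+$ in the order of the linear extension $\bot,v_1,\dots,v_n,\top$ and $i$ running $0,1,\dots,k$. For each fixed $i$, computing the whole layer $d(\cdot,i)$ from the layer $d(\cdot,i-1)$ is, up to replacing $\prec$ by $\cov$, one pass of the maximal-independent-set recurrence of Theorem~\ref{thm:mis}; using the prefix-sum reformulation (the auxiliary values $t$) noted after Theorem~\ref{thm:is}, this pass uses each edge and non-edge of $G$ once and costs $O(n+m^*)$. Summing over $i\in[0,k]$ gives $O(k(n+m^*))$, and taking $k=n$ fills the entire table, delivering the counts for all $k$ at once in $O(n(n+m^*))=O(n^2+nm^*)$ time. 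I do not anticipate a serious obstacle: the only points needing care beyond transcribing the earlier proofs are that intersecting the cover-relation restriction with the size index spoils neither the disjointness nor the base cases $D(\bot,0)=\{\es\}$, $D(\bot,i)=\es$, $D(v_j,0)=\es$, and that the $O(n+m^*)$ speedup must be applied layer by layer; both are routine.
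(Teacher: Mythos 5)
Your proposal is correct and follows exactly the route the paper intends: the paper gives no explicit proof of Theorem~\ref{thm:misk} (it only remarks that Lemmas~\ref{l:misk in=>mis} and~\ref{l:misk mis=>in} ``are proved similarly to the corresponding results above''), and your argument is precisely that merge of the proofs of Theorems~\ref{thm:mis} and~\ref{thm:isk}, including the disjointness via $D(u,i-1)\subseteq B(u)\subseteq A(u)$ and Lemma~\ref{l:is disjoint}. Your writeup is in fact more detailed than the paper's own treatment.
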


We have no corresponding hardness result in this case, since the complexity of
counting \emph{maximal} $k$-independent sets in bipartite graphs appears to be open.

\section{Comparability graphs}\label{sec:comp}

Counting independent sets in a cocomparability graph $G$ is equivalent to counting cliques
in its complement $\bar{G}$. Since our algorithms are symmetrical between $G$ and $\bar{G}$,
all the results of section~\ref{sec:cocomp} remain true by interchanging the words
``cocomparability'' and ``cocomparability'', and the words ``clique'' and ``independent set''.
Therefore, we will not detail the modified results.

\section{Permutation graphs}\label{sec:perm}

Permutation graphs are both comparability and cocomparability graphs, so the algorithms
of section~\ref{sec:cocomp} are valid both for counting independent sets in permutation graphs
and (with obvious modifications) for counting cliques in permutation graphs. The result
of~\cite{bip perm} for counting independent sets in bipartite permutation graphs is a special case.

\end{document}